\def\boxit#1{\vbox{\hrule\hbox{\vrule\kern3pt
        \vbox{\kern3pt#1\kern3pt}\kern3pt\vrule}\hrule}}
\def\reals{ { {\rm  I \kern-0.15em R }  } }
\def\complex{ {\,{{\rm C} \kern-0.50em \raise0.20ex {  |}}\, }}
\def\Rbf{{\bf R}}
\def\Rxx{\Rbf_{\ssstyle X\kern-.1em X}}
\let\ssstyle=\scriptscriptstyle
\def\Kout{\setbox1=\hbox{\Huge\bf K}\hbox to
1.05\wd1{\hspace{.05\wd1}
\def\Sout{\setbox1=\hbox{\Huge\bf S}\hbox to 1.05\wd1{\hspace{.05\wd1}

\renewcommand{\QED}{\QEDopen}

\def\nn{{\nonumber}}

\newtheorem{lemma}{Lemma}
\newtheorem{theorem}{Theorem}




\title{\LARGE \bf Multi-channel Opportunistic Access: A Case of Restless Bandits with Multiple Plays}
\author{Sahand Haji Ali Ahmad, Mingyan Liu
\thanks{This work is partially supported by CNS-0238035 and CCF-0910765.  S. H. A. Ahmad and M. Liu are with the Dept. of Electrical Engineering and Computer Science, University of Michigan, Ann Arbor, MI 48105, \{shajiali, mingyan\}@eecs.umich.edu.}
}

\begin{document}

\maketitle
\thispagestyle{empty}
\pagestyle{empty}

\begin{abstract}
This paper considers the following stochastic control problem that arises in opportunistic spectrum access: a system consists of $n$ channels where the state (``good'' or ``bad'') of each channel evolves as independent and identically distributed Markov processes. A user can select exactly $k$ channels to sense and access (based on the sensing result) in each time slot. A reward is obtained whenever the user senses and accesses a ``good'' channel. The objective is to design a channel selection policy that maximizes the expected discounted total reward accrued over a finite or infinite horizon.  In our previous work we established the optimality of a greedy policy for the special case of $k=1$ (i.e., single channel access) under the condition that the channel state transitions are positively correlated over time.  In this paper we show under the same condition the greedy policy is optimal for the general case of $k\geq 1$; the methodology introduced here is thus more general.  This problem may be viewed as a special case of the restless bandit problem, with {\em multiple plays}. 
 We discuss connections between the current problem and existing literature on this class of problems.

\end{abstract}

\bibliographystyle{ieeetr}

\section{Introduction}\label{sec:intro}

We consider the following stochastic control problem: there are $n$ uncontrolled Markov chains, each an independent, identically-distributed, two-state discrete-time Markov process.  The two states will be denoted as state $1$ and state $0$ and the transition probabilities are given by $p_{ij}$, $i, j = 0, 1$.  

The system evolves in discrete time.  In each time instance, a user selects exactly $k$ out of the $n$ processes and is allowed to observe their states.  For each selected process that happens to be in state $1$ the user gets a reward; there is no penalty for selecting a channel that turns out to be state $0$ but each such occurrence represents a lost opportunity because the user is limited to selecting only $k$ of them.  The ones that the user does not select do not reveal their true states. Out objective is to derive a selection strategy whose total expected discounted rewarded over a finite or infinite horizon is maximized.

This is a Markov decision process (or MDP) problem \cite{puterman}.  Furthermore, it is a partially observed MDP (or POMDP) problem \cite{Smallwood&Sondik:71OR} due to the fact that the states of the underlying Markov processes are not fully observed at all times and that as a consequence the system {\em state} as perceived by the user is in the form of a probability distribution, also commonly referred to as the {\em information state} of the system \cite{kumar}.
This problem is also an instance of the restless bandit problem with multiple plays \cite{whittle,anantharam87,agrawal90}.  More discussion on this literature is provided in section \ref{sec:discussion}.

The application of the above problem abstraction to multichannel opportunistic access is as follows.  Each Markov process represents a wireless channel, whose state transitions reflect dynamic changes in channel conditions caused by fading, interference, and so on.  Specifically, we will consider state $1$ as the ``good'' state, in which a user (or transmitter) can successfully communicate with a receiver; state $0$ is the ``bad'' state, in which communication will fail.  The channel state is assumed to remain constant within a single discrete time step.  A multichannel system consists of $n$ distinct channels.  A user who wishes to use a particular channel at the beginning of a time step must first sense or probe the state of the channel, and can only transmit in a channel probed to be in the ``good'' state in the same time step.  The user cannot sense and access more than $k$ channels at a time due to hardware limitations.  If all $k$ selected channels turn out to be in the ``bad'' state, the user has to wait till the beginning of the next time step to repeat the selection process.

This model captures some of the essential features of multichannel opportunistic access as outlined above.  On the other hand, it has the following limitations: the simplicity of the iid two-state channel model; the implicit assumption that channel sensing is perfect and the lack of penalty if the user transmits in a bad channel due to imperfect sensing; and the assumption that the user can select an arbitrary set of $k$ channels out of $n$ (e.g., it may only be able to access a contiguous block of channels due to physical layer limitations).  Nevertheless this model does allow us to obtain analytical insights into the problem, and more importantly, some insight into the more general problem of restless bandits with multiple plays.

This model has been used and studied quite extensively in the past few years, mostly within the context of opportunistic spectrum access and cognitive radio networks, see for example \cite{Zhao&etal:08TWC,sahand-IT08,guha,guha2}.  \cite{Zhao&etal:08TWC} studied the same problem and proved the optimality of the greedy policy in the special case of $k=1, n=2$, \cite{Liu&Zhao:08Asilomar} proved the optimality of the greedy policy in the case of $k=n-1$, while \cite{guha,guha2} looked for provably good approximation algorithms for a similar problem.  Furthermore, the indexability (in the context of Whittle's heuristic index and indexability definition \cite{whittle}) of the underlying problem was studied in \cite{dahleh08,Liu&Zhao:08SDR}.

Our previous work \cite{sahand-IT08} established the optimality of the greedy policy for the special case of $k=1$ for arbitrary $n$ and under the condition $p_{11}\geq p_{01}$, i.e., when a channel's state transitions are positively correlated.  In this sense, the results reported in the present paper is a direct generalization of results in \cite{sahand-IT08}, as we shall prove the optimality of the greedy policy under the same condition but for any $n\geq k\geq 1$.  The main thought process used to prove this more general result derives from that used in \cite{sahand-IT08}.  However, there were considerable technical difficulties we had to overcome to reach the conclusion.

In the remainder of this paper we first formulate the problem in Section~\ref{sec:problem}, present preliminaries in Section \ref{sec:preliminaries}, and then prove the optimality of the greedy policy in Section~\ref{sec:optimal}.  We discuss our work within the context of restless bandit problems in Section~\ref{sec:discussion}.  Section~\ref{sec:conclusion} concludes the paper.

\section{Problem Formulation}\label{sec:problem}

As outlined in the introduction, we consider a user trying to access the wireless spectrum pre-divided into $n$ independent and statistically identical channels, each given by a two-state Markov chain.  The collection of $n$ channels is denoted by ${\cal N}$, each indexed by $i=1, 2, \cdots, n$.

The system operates in discrete time steps indexed by $t$, $t=1, 2, \cdots, T$, where $T$ is the time horizon of interest.  At time $t^{-}$, the channels go through state transitions, and at time $t$ the user makes the channel selection decision.  Specifically, at time $t$ the user selects $k$ of the $n$ channels to sense, the set denoted by $a^k \subset {\cal N}$.

For channels sensed to be in the ``good'' state (state $1$), the user transmits in those channels and collects one unit of reward for each such channel.  If none is sensed good, the user does not transmit, collects no reward, and waits until $t+1$ to make another choice. This process repeats sequentially until the time horizon expires.

The underlying system (i.e., the $n$ channels) is not fully observable to the user.  Specifically, channels go through state transition at time $t^{-}$ (or anytime between $(t-1, t)$), thus when the user makes the channel sensing decision at time $t$, it does not have the true state of any channel at time $t$.

Furthermore, upon its action (at time $t^{+}$) only $k$ channels reveal their true states.

The user's action space at time $t$ is given by the finite set $a^k(t) \subset {\cal N}$, where $a^k(t) = \{i_1,\ldots,i_K\}$.

We know (see e.g., \cite{Smallwood&Sondik:71OR,marcus,kumar}) that a sufficient statistic of such a system for optimal decision making, or the {\em information state} of the system \cite{marcus,kumar}, is given by the conditional probabilities of the state each channel is in given all past actions and observations.  Since each channel can be in one of two states, we denote this information state by ${\bar\omega}(t)=[\omega_1(t),\cdots,\omega_n(t)] \in [0, 1]^n$, where $\omega_i(t)$ is the conditional probability that channel $i$ is in state $1$ at time $t$ given all past states, actions and observations \footnote{Note that it is a standard way of turning a POMDP problem into a classic MDP problem by means of the information state, the main implication being that the state space is now uncountable.}.
Throughout the paper $\omega_i(t)$ will be referred to as the information state of channel $i$ at time $t$, or simply the channel probability of $i$ at time $t$.

Due to the Markovian nature of the channel model, the future information state is only a function of the current information state and the current action; i.e., it is independent of past history given the current information state and action.

It follows that the information state of the system evolves as follows.  Given that the state at time $t$ is ${\bar\omega}(t)$ and action $a^k(t)$ is taken, $\omega_i(t+1)$ for $i\in a^k(t)$ can take on two values: (1) $p_{11}$ if the observation is that channel $i$ is in a ``good'' state; 
this occurs with probability 
$\omega_i(t)$;
(2) $p_{01}$ if the observation is that channel $i$ is in a ``bad'' state;  
this occurs with probability 
$1-\omega_i$.
For any other channel $j\not\in a^k(t)$, with probability 1 the corresponding $\omega_j(t+1) = \tau(\omega_j(t))$ where the operator $\tau: [0,1] \rightarrow [0,1]$ is defined as
\begin{equation}
\tau(\omega) := \omega p_{11} + (1-\omega) p_{01},~~~ 0\le\omega\le 1 ~.
\end{equation}

The objective is to maximize its total discounted expected
reward over a finite horizon given in the following problem (P) (extension to infinite horizon is discussed in Section \ref{sec:discussion}):  

\begin{eqnarray}
\mbox{(P):}  ~~~ \max_{\pi} J_T^{\pi}({\bar\omega})
= \max_{\pi} E^{\pi} [ \sum_{t=1}^{T} \beta^{t-1} R_{\pi_t}({\bar\omega}(t)) | {\bar\omega}(1) = {\bar\omega}]\nn
\end{eqnarray}
where $0\leq \beta \leq 1$ is the discount factor, and $R_{\pi_t}({\bar\omega}(t))$ is the reward collected under state ${\bar\omega}(t)$ when channels in the set $a^k(t)=\pi_t(\bar\omega(t))$ are selected.  
The maximization in (P) is over the class of deterministic Markov policies \footnote{A Markov policy is a policy that derives its action only depending on the current (information) state, rather than the entire history of states, see e.g., \cite{kumar}.}.
An admissible policy $\pi$, given by the vector $\pi = [\pi_1, \pi_2, \cdots, \pi_T]$, is such that $\pi_t$ specifies a mapping from the current information state $\bar\omega(t)$ to a channel selection action $a^k(t)=\pi_t(\bar\omega(t)) \subset \{1, 2, \cdots, n\}$.
This is done without loss of optimality due to the Markovian nature of the underlying system, and due to known results on POMDPs \cite[Chapter 6]{kumar}.

\section{Preliminaries}\label{sec:preliminaries}

The dynamic programming (DP) representation of problem (P) is given as follows:
\begin{eqnarray}
V_{T}({\bar\omega}) &=& \max_{a^k \in {\cal N}, |a^k|=k} E[R_{a^k}({\bar\omega}) ] \nonumber\\
V_{t}({\bar\omega}) &=& \max_{a^k \in {\cal N}, |a^k|=k}

( \sum_{i \in a^k} \omega_i + 
 \beta \cdot \nonumber \\
 && \sum_{l_i \in \{0,1\}, i\in a^k} \left( \prod_{i\in a^k} \omega_{i}^{l_i}(1-\omega_{i})^{1-l_i} \right) \cdot \nn \\
&& V_{t+1}(p_{01}, \ldots, p_{01}, \tau(\omega_j), p_{11}, \ldots, p_{11})), \\
&& ~~~~~ t=1, 2, \cdots, T-1.  \label{DP-finite-t} \nonumber
\end{eqnarray}
In the last term, the channel state probability vector consists of three parts: a sequence of $p_{01}$'s that represent those channels sensed to be in state $0$ at time $t$ and the length of this sequence is the number of $l_i$'s equaling zero; a sequence of values $\tau(\omega_j)$ for all $j\not\in a^k$; and a sequence of $p_{11}$'s that represent those channels sensed to be in state $1$ at time $t$ and the length of this sequence is the number of $l_i$'s equaling one.  Note that the future expected reward is calculated by summing over all possible realizations of the $k$ selected channels.

The value function $V_t(\bar\omega)$ represents the maximum expected future reward that can be accrued starting from time $t$ when the information state is ${\bar\omega}$.  In particular, we have $V_1(\bar\omega) = \max_{\pi} J^{\pi}_T(\bar\omega)$, and an optimal deterministic Markov policy exists such that $a=\pi^*_t (\bar\omega)$ achieves the maximum in (\ref{DP-finite-t}) (see e.g., \cite{puterman} (Chapter 4)).  

For simplicity of representation, we introduce the following notations:
\begin{itemize}
\item $p_{01}[x]$: this is the vector $[p_{01}, p_{01}, \cdots, p_{01}]$ of length $x$;
\item $p_{11}[x]$: this is the vector $[p_{11}, p_{11}, \cdots, p_{11}]$ of length $x$.
\item We will use the notation:
\begin{eqnarray*}
 q(l_1, \cdots, l_{k}) := \prod_{1\leq i \leq k} \left( \omega_i^{l_i} (1-\omega_i)^{1-l_i}\right)
\end{eqnarray*}
for $l_1, \cdots, l_k \in \{0, 1\}$.  That is, given a vector of $0$s and $1$s (total of $k$ elements), $q()$ is the probability that a set of $k$ channels are in states given by the vector.
\end{itemize}

With the above notation, Eqn (\ref{DP-finite-t}) can be written as
\begin{eqnarray*}
V_{t}({\bar\omega}) &=& \max_{a^k \in {\cal N}, |a^k|=k}
( \sum_{i \in a^k} \omega_i +  \beta \cdot \\
&& \sum_{l_i \in \{0,1\}, i\in a^k} q(l_1, \cdots, l_k) \cdot\\

&& V_{t+1}(p_{01}[k- \sum{l_i}], \cdots, \tau(\omega_j), p_{11}[\sum{l_i}] )~.
\end{eqnarray*}

Solving (P) using the above recursive equation can be computationally heavy, especially considering the fact that $\bar\omega$ is a vector of probabilities.  It is thus common to consider suboptimal policies that are easier to compute and implement.  One of the simplest such heuristics is a greedy policy where at each time step we take an action that maximizes the immediate one-step reward. Our focus is to examine the optimality properties of such a simple greedy policy.

For problem (P), the greedy policy under state ${\bar\omega} = [\omega_1, \omega_2,
\cdots, \omega_n]$ is given by
\begin{equation}
a^k({\bar\omega})= 
\arg\max_{a^k \subset {\cal N}, |a^k|=k}  \sum_{i\in a^k} \omega_i ~.
\label{eq:a*}
\end{equation}
That is, the greedy policy seeks to maximize the reward {\em as if} there were only one step remaining in the horizon.  
In the next section we investigate the optimality of this policy.  Specifically, we will show that it is optimal in the case of $p_{11}\geq p_{01}$.  This extends the earlier result in \cite{sahand-IT08} that showed this to be true for the special case of $k=1$.

\section{Optimality of the Greedy Policy}
\label{sec:optimal}

In this section we show that the greedy policy is optimal when $p_{11} \geq p_{01}$. 
The main theorem of this section is as follows.
\begin{theorem} \label{mainThm}
The greedy policy is optimal for Problem (P) under the assumption that $p_{11}\geq p_{01}$.  That is, for $t=1, 2, \cdots, T$, $k\leq n$, and $\forall \bar\omega = [\omega_1, \cdots, \omega_n] \in [0, 1]^n$, we have
\begin{eqnarray}
V_t^k(\bar\omega; z^k(\bar\omega)) \geq V_t^k(\bar\omega; a^k), ~~~~~ \forall a^k \subset {\cal N}, 
\end{eqnarray}
where $z^k(\bar\omega)$ is the subset whose elements (indices) correspond to the $k$ largest values in $\bar\omega$, and $V_t^k(\bar\omega; a^k)$ the expected value of action $a^k$ followed by behaving optimally. 
\end{theorem}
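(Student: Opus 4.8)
My plan is to prove the theorem by backward induction on $t$, carrying along two auxiliary structural properties of the value function that the inductive step will need: (a) \emph{symmetry}, i.e.\ $V_t$ is invariant under any permutation of the channel indices (immediate from the statistical identity of the channels), and (b) \emph{monotonicity}, i.e.\ $V_t(\bar\omega)$ is nondecreasing in each coordinate $\omega_i$. The base case $t=T$ is trivial, since $V_T$ equals the one-step reward $\sum_{i\in a^k}\omega_i$, which the top-$k$ selection $z^k$ maximizes, and symmetry and monotonicity are evident. For the inductive step I assume greedy is optimal at $t+1,\dots,T$ (so that $V_{t+1}$ is the greedy-policy value function) and that $V_{t+1}$ is symmetric and monotone, and I aim to re-establish all three facts at time $t$.

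The engine of the inductive step is an \emph{exchange argument}. Any action $a^k$ can be transformed into $z^k(\bar\omega)$ by a finite sequence of adjacent swaps, each removing a selected channel $j$ and inserting an unselected channel $i$ with $\omega_i\ge\omega_j$ while keeping the remaining $k-1$ selections fixed. Hence it suffices to prove a \textbf{swap lemma}: one such swap never decreases the value. Writing $a$ for the action selecting $j$ (together with a common set $S$ of $k-1$ channels) and $a'$ for the action selecting $i$ instead, the immediate-reward difference is exactly $\omega_i-\omega_j\ge 0$. For the future term I condition on the common realization of the $S$-channels and on the identical passive evolution of all channels outside $\{i,j\}\cup S$, reducing the comparison to a two-argument function $F(x,y):=V_{t+1}(\dots)$ with channel $i$ in state $x$ and channel $j$ in state $y$. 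Using symmetry $F(x,y)=F(y,x)$ and the linear dependence of the continuation on each \emph{active} channel's probability, the two futures become $F(p_{01},\tau(\omega_j))+\omega_i\,\phi(\tau(\omega_j))$ and $F(p_{01},\tau(\omega_i))+\omega_j\,\phi(\tau(\omega_i))$, where $\phi(y):=F(p_{11},y)-F(p_{01},y)\ge 0$ is the marginal value of channel $i$ turning out to be good.

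After rearranging, the quantity to be shown nonnegative takes the form $(\omega_i-\omega_j)\bigl(1+\beta\phi(\tau(\omega_i))\bigr)-\beta\bigl[(1-\omega_i)B+\omega_i A\bigr]$, where $A$ and $B$ are the increments of $F(p_{11},\cdot)$ and $F(p_{01},\cdot)$ as the second coordinate rises from $\tau(\omega_j)$ to $\tau(\omega_i)$; both are nonnegative by monotonicity and both scale with $\tau(\omega_i)-\tau(\omega_j)=(p_{11}-p_{01})(\omega_i-\omega_j)$, so the positive immediate gain and the negative future penalty are of the same order in $\omega_i-\omega_j$. This is precisely where $p_{11}\ge p_{01}$ enters—it guarantees $\phi\ge 0$, the monotonicity of $\tau$, and the ordering $p_{01}\le\tau(\omega)\le p_{11}$—and it is also the main obstacle: a crude Lipschitz bound on $A$ and $B$ does \emph{not} close the argument, since the worst-case sensitivity of $V_{t+1}$ can let the future penalty exceed the immediate gain when $\beta(p_{11}-p_{01})$ is large. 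I therefore expect to need a sharper, problem-specific estimate that couples the two continuation trajectories and exploits the fact that the same monotone map $\tau$ drives both channels, so that the loss $B$ from demoting $i$ to the passive state $\tau(\omega_i)$ is offset by the higher return $\phi(\tau(\omega_i))$ earned while $i$ is kept active. Once the swap lemma holds, greedy optimality at $t$ follows by iterating swaps, and monotonicity and symmetry of $V_t$ are inherited from $V_{t+1}$ through the DP recursion, closing the induction.
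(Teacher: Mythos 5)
Your overall architecture --- backward induction on $t$ plus an exchange argument reducing optimality to a single adjacent swap --- matches the paper's (its Lemma 2 is exactly your swap reduction, applied to the greedy-continuation function $W^k_t$, and your base case and iteration of swaps are the same). But your proof has a genuine gap, and you identify it yourself: after conditioning on realizations, the immediate gain $\omega_i-\omega_j$ must dominate a future penalty that scales like $\beta(p_{11}-p_{01})(\omega_i-\omega_j)$ times the sensitivity of the continuation value, and you concede that a crude Lipschitz bound does not close the argument and that a ``sharper, problem-specific estimate'' is needed. That estimate is the missing idea; the auxiliary properties you propose to carry through the induction (symmetry and coordinatewise monotonicity of $V_{t+1}$) are not strong enough to produce it.

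The paper closes the loop by strengthening the induction hypothesis to a \emph{pair} of coupled inequalities on $W^k_t$, proven jointly by induction on $t$ (its Lemma 3): (B) the swap inequality itself, and (A) the bounded-regret inequality $1+W^k_t(\omega_2,\cdots,\omega_n,\omega_1)\ \geq\ W^k_t(\omega_1,\cdots,\omega_n)$, which says that cyclically demoting the lowest-probability channel to the bottom of the list costs at most one unit of immediate reward. Inequality (A) is precisely the quantitative sensitivity bound your argument lacks. In the paper's sample-path case analysis over the four joint realizations of the two channels whose roles differ (for instance the $(0,0)$ and $(1,0)$ cases), the extra immediate reward of $1$ collected on one side is pushed inside the expectation over realizations (using $\beta\leq 1$ and the fact that the probabilities $q(\cdot)$ sum to one) and then absorbed by invoking (A) at time $t+1$, while (B) at time $t+1$ is used to re-sort the resulting probability vectors; neither inequality survives the induction on its own. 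To repair your proof, replace the monotonicity hypothesis with an explicit hypothesis of the form ``the value changes by at most $1$ under demotion of the lowest channel'' and carry it through the induction alongside the swap lemma.
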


Below we present a number of lemmas used in the proof of this theorem.  The first lemma introduces a notation that allows us to express the expected future reward under the greedy policy.

\begin{lemma}\label{lem:T}
There exist $T$ $n$-variable functions, denoted by $W_t^k(\bar\omega)$, $t=1, 2, \cdots, T$, each of which is a polynomial of order 1\footnote{Each function $W_t$ is affine in each variable, when all other variables are held constant.} and can be represented recursively in the following form:
\begin{eqnarray*}
&& W_T^k(\bar\omega) = \sum_{n-1+1\leq i\leq n} \omega_i \nonumber \\
&& W_t^k(\bar\omega) = \sum_{n-1+1\leq i\leq n} \omega_i + 
\beta \cdot \nonumber \\ 
&& \sum_{l_n, l_{n-1}, \cdots, l_{n+k-1} \in \{0, 1\}} 
q(l_n, \cdots, l_{n+k-1}) \cdot \nonumber\\
&& W_{t+1}^k(p_{01}[k-\sum l_i], \tau(\omega_i), \cdots, \tau(\omega_{n-k}), p_{11}[\sum l_i] ) ~. 
\end{eqnarray*}
\end{lemma}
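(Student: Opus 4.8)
The plan is to prove the lemma by backward induction on $t$, running from $t=T$ down to $t=1$. Note first that existence is immediate by construction: the functions $W_t^k$ are \emph{defined} by the displayed recursion, with $W_T^k$ given explicitly and each $W_t^k$ obtained from $W_{t+1}^k$. The substantive content is the assertion that every $W_t^k$ is an order-$1$ polynomial, i.e.\ affine in each of its $n$ arguments when the other arguments are held fixed, and this is precisely the statement I would carry through the induction.

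For the base case $t=T$, the function $W_T^k(\bar\omega)=\sum_{n-k+1\le i\le n}\omega_i$ is linear and hence trivially affine in each variable. For the inductive step I would assume $W_{t+1}^k$ is affine in each of its coordinates, fix an arbitrary index $j$ together with all variables other than $\omega_j$, and show the resulting expression is affine in $\omega_j$. The immediate-reward term $\sum_{n-k+1\le i\le n}\omega_i$ is affine, so everything reduces to analyzing the discounted continuation term. I would split into two cases according to whether channel $j$ is unselected ($j\le n-k$) or selected ($n-k+1\le j\le n$).

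The unselected case is routine: for $j\le n-k$ the variable $\omega_j$ enters the continuation only through the argument $\tau(\omega_j)$, while the weights $q(\cdot)$ depend solely on the selected channels and are therefore constant in $\omega_j$. Since $\tau$ is affine and $W_{t+1}^k$ is affine in that coordinate by the induction hypothesis, the composition is affine in $\omega_j$, and a sum of affine functions remains affine. The main obstacle is the selected case $n-k+1\le j\le n$, where $\omega_j$ appears twice at once: explicitly in the probability weight through the factor $\omega_j^{l_j}(1-\omega_j)^{1-l_j}$, and implicitly in the next-state argument of $W_{t+1}^k$ through the observation outcome $l_j$.

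The key observation that unlocks this case is that a selected channel's next-state value is one of the \emph{constants} $p_{11}$ or $p_{01}$, independent of the current value $\omega_j$; hence the continuation value $W_{t+1}^k(\cdot)$ in each branch is a constant with respect to $\omega_j$. I would therefore group the sum over the outcome vector $\vec{l}$ according to the value of $l_j$, factoring $\omega_j$ out of the $l_j=1$ terms and $(1-\omega_j)$ out of the $l_j=0$ terms. The continuation then collapses to the form $\omega_j\cdot A+(1-\omega_j)\cdot B$, where $A$ and $B$ are constants obtained by summing the (constant) continuation values against the remaining $\omega_j$-independent weights. This expression is manifestly affine in $\omega_j$, and adding the affine immediate-reward term completes the inductive step; the induction then gives the claim for every $t$.
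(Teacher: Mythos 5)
Your proof is correct and follows exactly the route the paper indicates for this lemma (the paper omits the details, remarking only that the result follows by backward induction on $t$ using the recursion and the linearity of $\tau$): your case split between unselected channels (affineness preserved under composition with the affine map $\tau$) and selected channels (grouping the outcome sum by $l_j$ to obtain the form $\omega_j A + (1-\omega_j)B$ with $A,B$ constant in $\omega_j$) supplies precisely the missing details. No gaps; you also correctly read the paper's index typos as $\sum_{n-k+1\le i\le n}\omega_i$.
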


\medskip
The proof is easily obtained using backward induction on $t$ given the recursive equation and noting
that 
the mapping $\tau()$ is linear.  The detailed proof is thus omitted for brevity. 

A few remarks are in order on this function $W_t^k(\bar\omega)$. 
\begin{enumerate}
\item Firstly, when $\bar\omega$ is given by an ordered vector $[\omega_1, \omega_2, \cdots, \omega_n]$ with $\omega_1\le\omega_2\le \cdots\le\omega_n$, $W_t^k(\bar\omega)$ is the expected total discounted future reward (from $t$ to $T$) by following the greedy policy.  
%

This follows from how the greedy policy works in the special case of $p_{11}\geq p_{01}$.  Note that in this case the conditional probability updating function $\tau(\omega)$ is a monotonically increasing function,  i.e., $\tau(\omega_1) \geq \tau(\omega_2)$ for $\omega_1 \geq \omega_2$.  Therefore the ordering of channel probabilities is preserved among those that are not observed.

If a channel has been observed to be in state ``1'' (respectively ``0''), its probability at the next step becomes $p_{11}\geq \tau(\omega)$ (respectively $p_{01}\leq \tau(\omega)$) for any $\omega\in[0, 1]$.  In other words, a channel observed to be in state ``1'' (respectively ``0'') will have the highest (respectively lowest) possible probability among all channels.

Therefore if we take the initial information state $\bar\omega(1)$, order the channels according to their probabilities $\omega_i(1)$, and sense the highest $k$ channels (top $k$ of the ordered list) with ties broken randomly, then following the greedy policy means that in subsequent steps we will keep a channel in its current position if it was sensed to be in state $1$ in the previous slot; otherwise, it was observed to be in state $0$ and gets thrown to the bottom of the ordered list.  The policy then selects the next top most (or rightmost) $k$ channels on this new ordered list.  This procedure is essentially the same as that given in the recursive expression of $W()$. 

\item Secondly, when $\bar\omega$ is not ordered, $W_t^k()$ reflects a policy that simply goes down the list of channels by the order fixed in $\bar\omega$, while each time tossing the ones observed to be $0$ to the end of the list and keeing those observed to be $1$ at the top of the list. 

\item Thirdly, the fact that $W^K_t$ is a polynomial of order 1 and affine in each of its elements implies that
\begin{eqnarray*}
&&W^K_t(\omega_1, \cdots, \omega_{n-2}, y, x) \\
&& - W^K_t(\omega_1,\cdots,\omega_{n-2}, x, y)\nn\\
& =& (x-y)[W^K_t(\omega_1,\cdots,\omega_{n-2}, 0, 1)- \\
&& ~~~~~ W^K_t(\omega_{1},\cdots,\omega_{n-2}, 1, 0)] ~.
\end{eqnarray*}
Similar results hold when we change the positions of $x$ and $y$. 
To see this, consider the above 
as two functions of $x$ and $y$, each having an $x$ term, a $y$ term, an $xy$ term and a constant term.  Since we are only swapping the positions of $x$ and $y$ in these two functions, the constant term remains the same,  and so does the $xy$ term. Thus the only difference is the $x$ term and the $y$ term, as given in the above equation.  This linearity result is used later in our proof.

\end{enumerate}

The next lemma establishes a sufficient condition for the optimality of the greedy policy.

\medskip
\begin{lemma}\label{lem:suff}
Consider Problem (P) under the assumption that $p_{11}\geq p_{01}$. To show that the greedy policy is optimal at time $t$ given that it is optimal at $t+1, t+2, \cdots, T$, it suffices to show that at time $t$ we have 
\begin{eqnarray}
&& W^k_t(\omega_1,\cdots,\omega_{j},x,y,\cdots,\omega_n)  \nn \\
&\leq&   W^k_t(\omega_1,\cdots,\omega_{j},y,x,\cdots,\omega_n), \label{eqn:suff}
\end{eqnarray}
for all $x\geq y$ and all $0\leq j \leq n-2$, with $j=0$ implying 
$W^k_t(x, y, \omega_3, \cdots, \omega_n)  \leq  W^k_t(y, x, \omega_3, \cdots, \omega_n)$. 
\end{lemma}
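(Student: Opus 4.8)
The plan is to carry out the inductive step directly: assuming the greedy policy is optimal at times $t+1,\ldots,T$ and assuming the swap inequality (\ref{eqn:suff}) holds at time $t$, I will deduce that the greedy action $z^k(\bar\omega)$ is optimal at time $t$. The whole argument rests on two observations. First, (\ref{eqn:suff}) forces the increasingly sorted arrangement of the entries of $\bar\omega$ to be the maximizer of $W^k_t$ among all permutations of those entries. Second, for every action $a^k$ the quantity $V^k_t(\bar\omega;a^k)$ can be written exactly as $W^k_t$ evaluated at a suitable permutation of $\bar\omega$. Combining the two yields the theorem's inequality, because the greedy action corresponds precisely to the sorted arrangement.

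For the first observation I would use a bubble-sort argument. Given any arrangement of the entries that is not increasing, there exist two adjacent coordinates $j+1,j+2$ holding values $x\ge y$ in the order $(x,y)$; inequality (\ref{eqn:suff}) says that exchanging them to $(y,x)$ does not decrease $W^k_t$. A finite sequence of such adjacent transpositions sorts any permutation into increasing order while never decreasing $W^k_t$, so $W^k_t$ attains its maximum over all permutations of a fixed multiset of entries at the increasingly sorted vector. Only (\ref{eqn:suff}) at time $t$ is needed here.

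For the second observation, fix an action $a^k$ and let $\bar\omega^{a^k}$ denote the vector obtained by placing the $n-k$ unselected channel probabilities, sorted increasingly, in the first $n-k$ coordinates and the $k$ selected probabilities in the last $k$ coordinates. Unfolding the recursion of Lemma~\ref{lem:T} one step at $\bar\omega^{a^k}$ gives immediate reward $\sum_{i\in a^k}\omega_i$ and, for each observation outcome weighted by $q(\cdot)$, a successor of the form $(p_{01}[\cdot],\tau(\cdot),p_{11}[\cdot])$. Because $p_{11}\ge p_{01}$ the map $\tau$ is increasing with range contained in $[p_{01},p_{11}]$, so this successor is itself increasingly sorted. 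By the first of the remarks following Lemma~\ref{lem:T}, $W^k_{t+1}$ of an increasingly sorted vector equals the greedy value there, which by the induction hypothesis equals the optimal value $V_{t+1}(\cdot)$. Hence $W^k_t(\bar\omega^{a^k})=\sum_{i\in a^k}\omega_i+\beta\sum q(\cdot)\,V_{t+1}(\text{successor})=V^k_t(\bar\omega;a^k)$. Since the greedy choice $z^k(\bar\omega)$ selects the $k$ largest probabilities, $\bar\omega^{z^k(\bar\omega)}$ is exactly the increasingly sorted rearrangement of $\bar\omega$; the first observation then gives $V^k_t(\bar\omega;a^k)=W^k_t(\bar\omega^{a^k})\le W^k_t(\bar\omega^{z^k(\bar\omega)})=V^k_t(\bar\omega;z^k(\bar\omega))$, as required.

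I expect the identification $V^k_t(\bar\omega;a^k)=W^k_t(\bar\omega^{a^k})$ to be the delicate step. It requires that, after selecting $a^k$ and arranging the unselected channels in sorted order, the one-step successor be genuinely sorted, so that $W^k_{t+1}$ coincides with the greedy, hence by induction optimal, continuation value; this is exactly where the standing assumption $p_{11}\ge p_{01}$ enters, through the monotonicity of $\tau$ and the containment $p_{01}\le\tau(\omega)\le p_{11}$. The maximization step, by contrast, is purely combinatorial and uses only (\ref{eqn:suff}) at the current time $t$, with the induction hypothesis confined to handling the tail of the horizon.
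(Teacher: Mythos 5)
Your proposal is correct and follows essentially the same route as the paper's proof: identify the value of an arbitrary action $a^k$ followed by greedy play with $W^k_t$ evaluated at the arrangement listing the unselected channels (sorted) first and the selected ones last, then bubble-sort that arrangement into the fully sorted vector using (\ref{eqn:suff}) to conclude the greedy action is best. The only difference is that you spell out the identification $V^k_t(\bar\omega;a^k)=W^k_t(\bar\omega^{a^k})$ via the monotonicity of $\tau$ and the containment $p_{01}\le\tau(\omega)\le p_{11}$, a step the paper asserts without detail.
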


\begin{proof}
Since the greedy policy is optimal from $t+1$ on, it is sufficient to show that selecting the best $k$ channels followed by the greedy policy is better than selecting any other set of $k$ channels followed by the greedy policy.  If channels are ordered $\omega_1  \leq \cdots \leq  \omega_i \leq \cdots \leq \omega_n$ then the reward of the former is precisely given by $W^K_t(\omega_1,\hdots,\omega_n)$.  
On the other hand, the reward of selecting an arbitrary set $a^k$ of $k$ channels followed by acting greedily can be expressed as $W^k_t(\overline{a^k}, a^k)$, where $\overline{a^k}$ is the (increasingly) ordered set of channels not included in $a^k$.  
It remains to show that if Eqn (\ref{eqn:suff}) is true then we have 
$W^k_t(\overline{a^k}, a^k) \leq W^K_t(\omega_1,\hdots,\omega_n)$.  This is easily done since the ordered list ($\overline{a^k}, a^k$) may be converted to $\omega_1,\hdots,\omega_n$ through a sequence of switchings between two neighboring elements that are not increasingly ordered.  Each such switch invokes (\ref{eqn:suff}), thereby maintaining the ``$\leq$'' relationship. 
\end{proof}


\medskip
\begin{lemma}\label{lem:bound}
For $0 \leq \omega_1 \leq \omega_2 \leq \hdots \leq \omega_n \leq 1$, we have the following two inequalities for all $t=1, 2, \cdots, T$:
\begin{eqnarray*}
(A): && 1+W^k_t(\omega_2,\cdots,\omega_n,\omega_1)  \geq W^k_t(\omega_1,\cdots,\omega_n) \\
(B): && W_t^k(\omega_1, \cdots, \omega_j, y, x, \omega_{j+3}, \cdots, \omega_n) 
\geq \nn\\
&& W_t^k(\omega_1, \cdots, x, y, \omega_{j+3}, \cdots, \omega_n), 
\end{eqnarray*}
where $x\geq y$, $0\leq j\leq n-2$, and $j=0$ implies $W_t^k(y, x, \omega_3, \cdots, \omega_n) \geq W_t^k(x, y, \omega_3, \cdots, \omega_n)$. 
\end{lemma}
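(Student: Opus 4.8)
The plan is to prove (A) and (B) \emph{together} by backward induction on $t$, since the two inequalities are coupled: the inductive step for (B) needs an (A)-type bound at the next stage, and the inductive step for (A) needs the order-preservation supplied by (B). The base case $t=T$ is immediate from $W_T^k(\bar\omega)=\sum_{i=n-k+1}^{n}\omega_i$, the sum of the top $k$ coordinates. For (A) at $t=T$ the two sides differ by $1+\omega_1-\omega_{n-k+1}\ge 0$ (using $\omega_1\ge 0$ and $\omega_{n-k+1}\le 1$); for (B) at $t=T$ an adjacent swap changes the top-$k$ sum only when the pair straddles the boundary between positions $n-k$ and $n-k+1$, in which case the difference is exactly $x-y\ge 0$, and otherwise the two sides are equal.

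For the inductive step of (B), I would expand $W_t^k$ through its one-step recursion and split into cases according to where the swapped positions $j+1,j+2$ sit relative to the selected block (positions $n-k+1$ through $n$). If both are unselected, the immediate reward and the observation weights $q(\cdot)$ are unchanged, while the two coordinates enter the continuation only through $\tau(x),\tau(y)$; since $\tau$ is increasing when $p_{11}\ge p_{01}$, the inequality follows directly from (B) at $t+1$. If both are selected, the immediate reward is again unchanged, and a short computation shows the continuation is symmetric in the two channels: the ``both good'' and ``both bad'' outcomes carry identical weights and next states, while the two mixed outcomes produce the \emph{same} next state with combined weight $x+y-2xy$ on either side, so the two sides are in fact equal.

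The hard case, and the main obstacle, is the boundary case of (B), where position $n-k$ is unselected and $n-k+1$ is selected, so that the two configurations sense \emph{different} channels. Here the immediate reward favors the ordered side by exactly $x-y$, and the task is to show the continuation cannot lose more than this. I would factor out the observations of the other $k-1$ selected channels and reduce the per-outcome comparison to $x\,G(y)+(1-x)\,B(y)$ versus $y\,G(x)+(1-y)\,B(x)$, where $G(v)$ and $B(v)$ denote the continuation values when the contested channel is sensed good or bad and the unselected channel contributes $\tau(v)$; both are affine in $v$. Using (B) at $t+1$ to keep the $\tau$-block correctly ordered, and using (A) at $t+1$ to bound the gain from one extra good observation by $G(v)-B(v)\le 1$, the affine expansion shows the continuation deficit is at most $x-y$, which the immediate surplus absorbs. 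This is precisely where (A) is indispensable, which is why the two claims must be carried jointly.

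Finally, for the inductive step of (A) I would expand one step and compare the cyclically shifted vector (whose selected block contains the smallest coordinate $\omega_1$ in place of $\omega_{n-k+1}$) with the ordered one. The immediate rewards differ by $\omega_{n-k+1}-\omega_1\le 1$, so it suffices to show the continuation loss is at most $1-(\omega_{n-k+1}-\omega_1)$. After factoring the common observations, the two next-state families differ in the contested $\tau$-block by the analogous one-position shift ($\tau(\omega_1),\dots,\tau(\omega_{n-k})$ versus $\tau(\omega_2),\dots,\tau(\omega_{n-k+1})$) and in the sensing probability of the contested channel; invoking (A) at $t+1$ for the shifted block and (B) at $t+1$ for the coordinate ordering bounds this loss, and the slack $1-(\omega_{n-k+1}-\omega_1)\ge 0$ closes the induction, with the ``$+1$'' propagating cleanly through the recursion.
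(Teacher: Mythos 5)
Your proposal is correct and follows essentially the same route as the paper's proof: a joint backward induction on $t$ in which (A) and (B) bootstrap each other, the same three-case split for the inductive step of (B) (both swapped positions selected, both unselected, and the boundary case resolved via the affine expansion of $W^k_t$ together with (A) and (B) at $t+1$), and for (A) a comparison of the single differing sensed channel using both hypotheses at $t+1$. The only cosmetic difference is that the paper carries out the (A) step as an explicit four-case sample-path coupling on the joint realizations of the two contested channels, which is what your ``factor out the common observations'' step amounts to once $\tau(\omega)=\omega p_{11}+(1-\omega)p_{01}$ is expanded.
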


This lemma is the key to our main result and its proof, which uses a sample path argument, highly instructive.  It is however also lengthy, and for this reason has been relegated to the Appendix. 

\medskip
With the above lemmas, Theorem 1 is easily proven: 

{\em Proof of Theorem 1:}  We prove by induction on $T$.  When $t=T$, the greedy policy is obviously optimal. 
Suppose it is also optimal for all times $t+1, t+2, \cdots, T$, under the assumption $p_{11}\geq p_{01}$. 
Then at time $t$, by Lemma \ref{lem:suff}, it suffices to show that 
$W^k_t(\omega_1,\cdots,\omega_{j},x,y,\cdots,\omega_n)  \leq  W^k_t(\omega_1,\cdots,\omega_{j},y,x,\cdots,\omega_n)$ for all $x\geq y$ and $0\leq j \leq n-2$.  But this is proven in Lemma \ref{lem:bound}. 
\hspace*{\fill}~\QED\par



\section{Discussion} \label{sec:discussion}

While the formulation (P) is a finite horizon problem, the same result applies to the infinite horizon discounted reward case using standard techniques as we have done in our previous work \cite{icc-08,sahand-IT08}. 

In the case of infinite horizon, the problem studied in this paper is closely associated with the class of multi-armed bandit problems \cite{gittins} and restless bandit problems \cite{whittle}.  This is a class of problems where $n$ controlled Markov chains (also called machines or arms) are activated (or played) one at a time.  A machine when activated generates a state dependent reward and moves to the next state according to a Markov rule.  A machine not activated either stays frozen in its current state (a rested bandit) or moves to the next state according to a possibly different Markov rule (a restless bandit).  The problem is to decide the sequence in which these machines are activated so as to maximize the expected (discounted or average) reward over an infinite horizon.  

The multi-armed bandit problem was originally solved by Gittins (see \cite{gittins}), who showed that there exists an {\em index} associated with each machine that is solely a function of that individual machine and its state, and that playing the machine currently with the highest index is optimal.  This index has since been referred to as the {\em Gittins index}.  The remarkable nature of this result lies in the fact that it decomposes the $n$-dimensional problem into $n$ 1-dimensional problems, as an index is defined for a machine independent of others.  The restless bandit problem on the other hand was proven much more complex, and is PSPACE-hard in general \cite{tsitsiklis}.
Relatively little is known about the structure of its optimal policy in general.  In particular, the Gittins index policy is not in general optimal \cite{whittle}. 


When multiple machines are activated simultaneously, the resulting problem is referred to as multi-armed bandits with {\em multiple plays}.  Again optimal solutions to this class of problems are not known in general.  A natural extension to the Gittins index policy in this case is to play the machines with the highest Gittins indices (this will be referred to as the {\em extended Gittins index policy} below).  This is not in general optimal for multi-armed bandits with multiple plays and an infinite horizon discounted reward criterion, see e.g., \cite{ishikida92,pandelis99}.  However, it may be optimal in some cases, see e.g., \cite{pandelis99} for conditions on the reward function, and \cite{song02} for an undiscounted case where the Gittins index is always achieved at time 1.  Even less is known when the bandits are restless, though 
asymptotic results for restless bandits with multiple plays were provided in \cite{whittle} and \cite{weber}. 

The problem studied in the present paper is an instance of the restless bandits with multiple plays (in the infinite horizon case).   Therefore what we have shown in this paper is an instance of the restless bandits problem with multiple plays, for which the extended Gittins index policy is optimal.

\section{Conclusion} \label{sec:conclusion}

In this paper we studied a stochastic control problem that arose in opportunistic spectrum access.  A user can sense and access $k$ out of $n$ channels at a time and must select judiciously in order to maximize its reward.  We extend a previous result where a greedy policy was shown to be optimal in the special case of $k=1$ under the condition that the channel state transitions are positively correlated over time.  In this paper we showed that under the same condition the greedy policy is optimal for the general case of $k\geq 1$.   This result also contributes to the understanding of the class of restless bandit problems with multiple plays.

\bibliography{../osa-IT}

\begin{appendix}


{\em Proof of Lemma 3:} 
We would like to show
\begin{eqnarray*}
(A): && 1+W^k_t(\omega_2,\cdots,\omega_n,\omega_1)  \geq W^k_t(\omega_1,\cdots,\omega_n)  \\
(B): && W_t^k(\omega_1, \cdots, \omega_j, y, x, \omega_{j+3}, \cdots, \omega_n) 
\geq \\
&& W_t^k(\omega_1, \cdots, x, y, \omega_{j+3}, \cdots, \omega_n), 
\end{eqnarray*}
where $x\geq y$, $0\leq j\leq n-2$, and $j=0$ implies $W_t^k(y, x, \omega_3, \cdots, \omega_n) \geq W_t^k(x, y, \omega_3, \cdots, \omega_n)$. 

The two inequalities (A) and (B) will be shown together using an induction on $t$.  For $t=T$, part (A) is true because
$LHS = 1 + \omega_1 + \sum_{i=n-k+2}^{n} \omega_i \geq \omega_{n-k+1} + \sum_{i=n-k+2}^{n} \omega_i = RHS$. 
Part (B) is obviously true for $t=T$ since $x\geq y$. 

Suppose (A) and (B) are both true for $t+1, \cdots, T$.  Consider time $t$, and we will prove (A) first. 
Note that in the next step, channel 1 is selected by the action on the LHS of (A) but not by the RHS, while channel $n-k+1$ is selected by the RHS of (A) but not by the LHS.  Other than this difference both sides select the same set of channels indexed $n-k+2, \cdots, n$.  We now consider four possible cases in terms of the realizations of channels 1 and $n-k+1$. 

Case (A.1): channels 1 and $n-k+1$ have the state realizations ``0'' and ``1'', respectively.  

We will use a sample-path argument. Note that while these two channels are not both observed by either side, the realizations hold for the underlying sample path regardless.  In particular, even though the LHS does not select channel $n-k+1$ and therefore does not get to actually observe the realization of ``1'', the fact remains that channel $n-k+1$ is indeed in state 1 under this realization, and therefore its future expected reward must reflect this.  It follows that under this realization channel $n-k+1$ will have probability $p_{11}$ for the next time step even though we did not get to observe the state 1.  The same is true for the RHS.  This argument applies to the other three cases and is thus not repeated. 

Conditioned on this realization, the LHS and RHS are evaluated as follows (denoted as $\{LHS|_{(0,1)}\}$ and $\{RHS|_{(0,1)}\}$, respectively): 
\begin{eqnarray*}
&& \{LHS|_{(0,1)}\} \\
&=& 1 + \sum_{n-k+2\leq i\leq n} \omega_i + 
\beta \cdot \\
&& \sum_{l_{n-k+2}, \cdots, l_n \in\{0,1\}} 
q(l_{n-k+2}, \cdots, l_n) \cdot  \\
&&  W_{t+1}^k(p_{01}[k-\sum l_i], \tau(\omega_2), \cdots, \\
&&  ~~~~~~ \tau(\omega_{n-k+1})=p_{11}, p_{11}[\sum l_i] ) ~~; 
\end{eqnarray*}

\begin{eqnarray*}
&& \{RHS|_{(0,1)}\} \\
&=& 1 + \sum_{n-k+2\leq i\leq n} \omega_i + 
\beta \cdot \\
&& \sum_{l_{n-k+2}, \cdots, l_n \in\{0,1\}} 
q(l_{n-k+2}, \cdots, l_n) \cdot \\
&& W_{t+1}^k(p_{01}[k-\sum l_i -1], \tau(\omega_1)=p_{00},  \\
&& ~~~~~~ \tau(\omega_2), \cdots, \tau(\omega_{n-k}), p_{11}[\sum l_i +1] ) \nonumber \\
& =& \{LHS|_{(0,1)}\} 
\end{eqnarray*}

Case (A.2): channels 1 and $n-1+1$ have the state realizations ``1'' and ``1'', respectively.  
\begin{eqnarray*}
&& \{LHS|_{(1,1)}\} \\
&=& 1 + 1 + \sum_{n-k+2\leq i\leq n} \omega_i + 
\beta \cdot \\
&& \sum_{l_{n-k+2}, \cdots, l_n \in\{0,1\}} 
q(l_{n-k+2}, \cdots, l_n) \cdot \nonumber \\
&& W_{t+1}^k(p_{01}[k-\sum l_i -1], \tau(\omega_2), \cdots, \nn \\
&& ~~~~~~ \tau(\omega_{n-k+1})=p_{11}, p_{11}[\sum l_i +1] )~; 
\end{eqnarray*}

\begin{eqnarray*}
&& \{RHS|_{(1,1)}\} \\
&=& 1 + \sum_{n-k+2\leq i\leq n} \omega_i + 
\beta \cdot \nn \\
&& \sum_{l_{n-k+2}, \cdots, l_n \in\{0,1\}} 
q(l_{n-k+2}, \cdots, l_n) \cdot \nonumber \\
&& W_{t+1}^k(p_{01}[k-\sum l_i -1], \tau(\omega_1)=p_{11}, \nn \\
&& ~~~~~~ \tau(\omega_2), \cdots, \tau(\omega_{n-k}), p_{11}[\sum l_i +1] ) \nonumber \\
&\leq& 1 + \sum_{n-k+2\leq i\leq n} \omega_i + 
\beta \cdot\\
&& \sum_{l_{n-k+2}, \cdots, l_n \in\{0,1\}} 
q(l_{n-k+2}, \cdots, l_n) \cdot \nonumber \\
&& W_{t+1}^k(p_{01}[k-\sum l_i -1], \tau(\omega_2), \cdots, \tau(\omega_{n-k}), \\
&& ~~~~~~ p_{11}, p_{11}[\sum l_i +1] ) \nonumber \\
&=& \{LHS|_{(1,1)}\} -1 \leq \{LHS|_{(1,1)}\} 
\end{eqnarray*}
where the first inequality is due to the induction hypothesis of (B).

Case (A.3): channels 1 and $n-1+1$ have the state realizations ``0'' and ``0'', respectively.  
\begin{eqnarray*}
&& \{RHS|_{(0,0)}\} \\
&=& \sum_{n-k+2\leq i\leq n} \omega_i + 
\beta \cdot \\
&& \sum_{l_{n-k+2}, \cdots, l_n \in\{0,1\}} 
q(l_{n-k+2}, \cdots, l_n) \cdot \nonumber \\
&& W_{t+1}^k(p_{01}[k-\sum l_i ], \tau(\omega_1)=p_{01}, \tau(\omega_2), \cdots, \tau(\omega_{n-k}),\\
&& ~~~~~~ p_{11}[\sum l_i] ) ~; 
\end{eqnarray*}

\begin{eqnarray*}
&& \{LHS|_{(0,0)}\} \\
&=& 1 + \sum_{n-k+2\leq i\leq n} \omega_i + 
\beta \cdot \\
&& \sum_{l_{n-k+2}, \cdots, l_n \in\{0,1\}} 
q(l_{n-k+2}, \cdots, l_n) \cdot \nonumber \\
&& W_{t+1}^k(p_{01}[k-\sum l_i], \tau(\omega_2), \cdots, \tau(\omega_{n-k}), \\
&& ~~~~~~ \tau(\omega_{n-k+1})=p_{01}, p_{11}[\sum l_i] ) \nonumber \\
&\geq& 1 + \sum_{n-k+2\leq i\leq n} \omega_i + 
\beta \cdot \\
&& \sum_{l_{n-k+2}, \cdots, l_n \in\{0,1\}} 
q(l_{n-k+2}, \cdots, l_n) \cdot \nonumber \\
&& W_{t+1}^k(p_{01}[k-\sum l_i], \tau(\omega_2), \cdots, \tau(\omega_{n-k}), \\
&& ~~~~~~ p_{11}[\sum l_i], p_{01} ) \nonumber \\
&\geq& \sum_{n-k+2\leq i\leq n} \omega_i + 
\beta \cdot \\
&& \sum_{l_{n-k+2}, \cdots, l_n \in\{0,1\}} 
q(l_{n-k+2}, \cdots, l_n) \cdot \nonumber \\
&& \left( 1+ W_{t+1}^k(p_{01}[k-\sum l_i], \tau(\omega_2), \cdots, \tau(\omega_{n-k}), \right.\\
&& ~~~~~~ \left. p_{11}[\sum l_i], p_{01} ) \right) \nonumber \\
&\geq& \sum_{n-k+2\leq i\leq n} \omega_i + 
\beta \cdot \\
&& \sum_{l_{n-k+2}, \cdots, l_n \in\{0,1\}} 
q(l_{n-k+2}, \cdots, l_n) \cdot \nonumber \\
&& W_{t+1}^k(p_{01}, p_{01}[k-\sum l_i], \tau(\omega_2), \cdots, \tau(\omega_{n-k}), \\
&& p_{11}[\sum l_i]) \\
&=& \{RHS|_{(0,0)}\}  
\end{eqnarray*}
where the first inequality is due to the induction hypothesis of (B), the last inequality due to the induction hypothesis of (A).  Also, the second inequality utilizes the total probability over the distribution $q(l_{n-k+2}, \cdots, l_n)$ and the fact that $\beta\leq 1$. 

Case (A.4): channels 1 and $n-1+1$ have the state realizations ``1'' and ``0'', respectively.  
\begin{eqnarray*}
&& \{RHS|_{(1,0)}\} \\
&=& \sum_{n-k+2\leq i\leq n} \omega_i + 
\beta \cdot \\
&& \sum_{l_{n-k+2}, \cdots, l_n \in\{0,1\}} 
q(l_{n-k+2}, \cdots, l_n) \cdot \nonumber \\
&& W_{t+1}^k(p_{01}[k-\sum l_i], \tau(\omega_1)=p_{11}, \tau(\omega_2), \cdots, \\
&& ~~~~~~ \tau(\omega_{n-k}), p_{11}[\sum l_i] )
\end{eqnarray*}

\begin{eqnarray*}
&& \{LHS|_{(1,0)}\} \\
&=& 1 + 1 + \sum_{n-k+2\leq i\leq n} \omega_i + 
\beta \cdot \\
&& \sum_{l_{n-k+2}, \cdots, l_n \in\{0,1\}} 
q(l_{n-k+2}, \cdots, l_n) \cdot \nonumber \\
&& W_{t+1}^k(p_{01}[k-\sum l_i -1], \tau(\omega_2), \cdots, \tau(\omega_{n-k}), \\
&& ~~~~~~ \tau(\omega_{n-k+1})=p_{01}, p_{11}[\sum l_i+1] ) \nonumber \\
&\geq& 1 + 1 + \sum_{n-k+2\leq i\leq n} \omega_i + 
\beta \cdot \\
&& \sum_{l_{n-k+2}, \cdots, l_n \in\{0,1\}} 
q(l_{n-k+2}, \cdots, l_n) \cdot \nonumber \\
&& W_{t+1}^k(p_{01}[k-\sum l_i - 1], \tau(\omega_2), \cdots, \tau(\omega_{n-k}), \\
&& ~~~~~~ p_{11}[\sum l_i + 1], p_{01} ) \nonumber \\
&\geq& 1+ \sum_{n-k+2\leq i\leq n} \omega_i + 
\beta \cdot \\
&& \sum_{l_{n-k+2}, \cdots, l_n \in\{0,1\}} 
q(l_{n-k+2}, \cdots, l_n) \cdot \nonumber \\
&& \left( 1+ W_{t+1}^k(p_{01}[k-\sum l_i-1], \tau(\omega_2), \cdots, \right. \\
&& ~~~~~~ \left. \tau(\omega_{n-k}), p_{11}[\sum l_i + 1], p_{01} ) \right) \nonumber \\
&\geq& 1+ \sum_{n-k+2\leq i\leq n} \omega_i + 
\beta \cdot \\
&& \sum_{l_{n-k+2}, \cdots, l_n \in\{0,1\}} 
q(l_{n-k+2}, \cdots, l_n) \cdot \nonumber \\
&& W_{t+1}^k(p_{01}[k-\sum l_i], \tau(\omega_2), \cdots, \tau(\omega_{n-k}), \\
&& p_{11}[\sum l_i + 1] \nonumber \\
&\geq& 1+ \sum_{n-k+2\leq i\leq n} \omega_i + 
\beta \cdot \\
&& \sum_{l_{n-k+2}, \cdots, l_n \in\{0,1\}} 
q(l_{n-k+2}, \cdots, l_n) \cdot \nonumber \\
&& W_{t+1}^k(p_{01}[k-\sum l_i ], p_{11}, \tau(\omega_2), \cdots, \tau(\omega_{n-k}), \\
&& p_{11}[\sum l_i ] ) \nonumber \\
&=& 1+ \{RHS|_{(1,0)}\} \geq   \{RHS|_{(1,0)}\}
\end{eqnarray*}
where the first and last inequalities are due to the induction hypothesis of (B), the third due to the induction hypothesis of (A).  

With these four cases, we conclude the induction step of proving (A).  We next prove the induction step of (B). 
%
We consider three cases in terms of whether $x$ and $y$ are among the top $k$ channels to be selected in the next step. 

Case (B.1): both $x$ and $y$ belong to the top $k$ positions on both sides.  In this case there is no difference between the LHS and RHS along each sample path, since both channels will be selected and the result will be the same. 

Case (B.2): neither $x$ nor $y$ is among the top $k$ positions on either side.  This implies that $j\leq n-k-2$. 
We have: 
\begin{eqnarray*}
&& LHS \\
&=& \sum_{n-k+2\leq i\leq n} \omega_i + 
\beta \cdot\\
&& \sum_{l_{n-k+2}, \cdots, l_n \in\{0,1\}} 
q(l_{n-k+2}, \cdots, l_n) \cdot \nonumber \\
&& W_{t+1}^k(p_{01}[k-\sum l_i ], \tau(\omega_1), \cdots, \tau(\omega_j), \\
&& \tau(y), \tau(x), \tau(\omega_{j+3}), 
\cdots, p_{11}[\sum l_i] )~;
\end{eqnarray*}

\begin{eqnarray*}
&& RHS \\
&=& \sum_{n-k+2\leq i\leq n} \omega_i + 
\beta \cdot \\
&& \sum_{l_{n-k+2}, \cdots, l_n \in\{0,1\}} 
q(l_{n-k+2}, \cdots, l_n) \cdot \nonumber \\
&& W_{t+1}^k(p_{01}[k-\sum l_i ], \tau(\omega_1), \cdots, \tau(\omega_j), \\
&& \tau(x), \tau(y), \tau(\omega_{j+3}), 
\cdots, p_{11}[\sum l_i] ) \nonumber \\
&\geq& LHS
\end{eqnarray*}
where the last inequality is due to the monotonicity of $\tau()$ and the induction hypothesis of (B). 

Case (B.3): exactly one of the two belongs to the the top $k$ channels on each side.  This implies that $j=n-k-1$. 
By the linearity of the function $W^k_t$ we have the following: 
\begin{eqnarray}
&& W^k_t(\omega_1,\cdots, \omega_{n-k-1},y,x,\omega_{n-k+2},\cdots,\omega_n) \nn \\
&& - W^k_t(\omega_1,\cdots, \omega_{n-k-1},x, y ,\omega_{n-k+2},\cdots,\omega_n) \nn \\
&=& (x - y)  (W^k_t(\omega_1,\cdots, \omega_{n-k-1},0,1,\omega_{n-k+2},\cdots,\omega_n) - \nn \\
&& W^k_t(\omega_1,\cdots, \omega_{n-k-1},1,0,\omega_{n-k+2},\cdots,\omega_n) )\label{eqn:B3}
\end{eqnarray}
However, we have 
\begin{eqnarray*}
%
&& W^k_t(\omega_1,\cdots, \omega_{n-k-1},1,0,\omega_{n-k+2},\cdots,\omega_n) \\
&=& \sum_{n-k+2\leq i\leq n} \omega_i + 
\beta \cdot \\
&& \sum_{l_{n-k+2}, \cdots, l_n \in\{0,1\}} 
q(l_{n-k+2}, \cdots, l_n) \cdot \nonumber \\
 && W_{t+1}^k(p_{01}[k-\sum l_i], \tau(\omega_1), \cdots, \tau(\omega_{n-k-1}), \\
 && p_{11}, p_{11}[\sum l_i] ) \\
 &\leq& \sum_{n-k+2\leq i\leq n} \omega_i + 
\beta \cdot \\
&& \sum_{l_{n-k+2}, \cdots, l_n \in\{0,1\}} 
q(l_{n-k+2}, \cdots, l_n) \cdot \nonumber \\
 && \left( 1+ W_{t+1}^k(p_{01}[k-\sum l_i -1], \tau(\omega_1), \cdots,\right.\\
 && ~~~~~~ \left. \tau(\omega_{n-k-1}), p_{11}[\sum l_i + 1], p_{01} ) \right) \\
 &\leq& \sum_{n-k+2\leq i\leq n} \omega_i + 
\beta \cdot \\
&& \sum_{l_{n-k+2}, \cdots, l_n \in\{0,1\}} 
q(l_{n-k+2}, \cdots, l_n) \cdot \nonumber \\
 && \left( 1+ W_{t+1}^k(p_{01}[k-\sum l_i -1], \tau(\omega_1), \cdots,\right. \\
 && ~~~~~~ \left. \tau(\omega_{n-k-1}), p_{01}, p_{11}[\sum l_i+1]) \right) \\
 &\leq& 1+ \sum_{n-k+2\leq i\leq n} \omega_i + 
\beta \cdot \\
&& \sum_{l_{n-k+2}, \cdots, l_n \in\{0,1\}} 
q(l_{n-k+2}, \cdots, l_n) \cdot \nonumber \\
&& W_{t+1}^k(p_{01}[k-\sum l_i -1], \tau(\omega_1), \cdots, \tau(\omega_{n-k-1}), \\
&& ~~~~~~ p_{01}, p_{11}[\sum l_i + 1] ) \nonumber \\
&=& W^k_t(\omega_1,\cdots, \omega_{n-k-1},0,1,\omega_{n-k+2},\cdots,\omega_n)
\end{eqnarray*}
Since $x\geq y$, we have $LHS\geq RHS$ in Eqn (\ref{eqn:B3}).  This concludes the induction step of (B). 


\end{appendix}

\end{document}